\documentclass[pra,twocolumn,10pt, superscriptaddress,notitlepage,showpacs]{revtex4-1}
\usepackage{amsmath,amsfonts,amssymb,amstext,amscd,amsthm,dsfont,bbm,hyperref,natbib}
\usepackage{physics}
\usepackage{color}
\usepackage{soul,xcolor}
\usepackage{gensymb}
\hypersetup{
    colorlinks = true,
    citecolor=blue,
    linkcolor = red,
    anchorcolor = red,
    citecolor = blue,
    filecolor = red,
    pagecolor = red,
    urlcolor = red,
    }
\usepackage[normalem]{ulem}
\usepackage{graphicx}
\usepackage{bbold}
\usepackage{braket}
\usepackage{placeins,comment}

\newtheorem{Theorem}{Theorem}

\newtheorem{Remark}{Remark}
\newtheorem{Example}{Example}
\newtheorem{Proposition}{Proposition}
\expandafter\let\csname equation*\endcsname\relax
\expandafter\let\csname endequation*\endcsname\relax

\usepackage{graphicx}
\usepackage{hyperref}
\usepackage{float}
\usepackage[normalem]{ulem}
\usepackage{subcaption}
\usepackage{epstopdf}
\usepackage{color}
\begin{document} 
\title{Eternal non-Markovianity of qubit maps}
\author{Vinayak Jagadish}
\affiliation{Centre for Quantum Science and Technology, Chennai Institute of Technology, Chennai, 600069, India}
				\author{R. Srikanth}
\affiliation{Theoretical Sciences Division,
Poornaprajna Institute of Scientific Research (PPISR), 
Bidalur post, Devanahalli, Bengaluru 562164, India}


\begin{abstract} 
		As is well known, unital Pauli maps can be eternally non-CP-divisible. In contrast, here we show that in the case of non-unital maps, eternal non-Markovianity in the non-unital part is ruled out. In the unital case, the eternal non-Markovianity can be obtained by a convex combination of two dephasing semigroups, but not all three of them. We study these results and 
        ramifications arising from them.
\end{abstract}
\maketitle  
\section{Introduction and Preliminaries}
Quantum systems are not isolated and are always subjected to interactions with other systems that we have no control of or little access to, known as the environment~\cite{Davies1976-ab,haroche_exploring_2006}. Since the dynamics of the environment is difficult to be controlled or monitored, their effects manifest in the dynamics of the system as ``undisclosed measurements". The dynamics of an open quantum system is described by a dynamical map $\Phi(t)$ that acts on the set of states of the system $\rho(t) =\Phi(t)\rho(0)$. These maps are completely positive (CP) and trace-preserving (TP)~\cite{Quanta77}. The coherent evolution of quantum systems is usually hampered by the unavoidable coupling of quantum systems with the external environment. Countering these detrimental effects of external noise forms the basis of designing new quantum technologies. Traditionally, noise was treated as memoryless or Markovian, but it has now been recognized that this assumption is not totally true. Quantum non-Markovianity has become an attracting theme, therefore, owing to both the fundamental aspects and its impact on developing quantum technologies. 

A classical discrete-time stochastic process $\{x_t\}$ is Markovian if and only if the probability distribution at time $t_{i+1}$ depends only on the state of affairs at time $t_i$, i.e.,
\begin{equation} \label{classdef}
\begin{split}
& p(x_t|x_{t_i},x_{t_{i-1}},\dots,x_{t_1},x_0) = \ p(x_t|x_{t_i}),\\&  \forall \thinspace i\geq1, t>t_i>t_{i-1}>\dots t_1>0, 
\end{split}
\end{equation} 
if the conditional probabilities on both sides are well defined. Thus, classical Markovianity is memoryless and is totally independent of the history of the dynamics of the system of interest.
However, quantum observables in general are non commutative, and even so for the same observable at different times. This restricts the existence of an analogous joint probability density for quantum systems. This makes the definition of Markovianity in the quantum regime nontrivial. Several definitions of quantum Markovianity have been proposed (see, e.g.,~\cite{breuer_colloquium:_2016,de_vega_dynamics_2017,li_concepts_2017,rivasreview}) and therefore the concept is not unique. However, two definitions of quantum Markovianity have been of interest:  

Decreasing distinguishability of states: For a classical Markovian process $\{x_t\}$ as in Eq.~(\ref{classdef}), the probability of correctly distinguishing between two evolving probability densities  $p_1(x_t)$ and $p_2(x_t)$ decreases monotonically in time. Breuer, Laine and Piilo (BLP) adapted this to the quantum regime identifying quantum Markovianity with the decreasing distinguishability of the states of the system in consecutive times~\cite{breuer-prl-2009}. The BLP definition of quantum Markovianity corresponds to a monotonic decreasing trace distance ${\rm tr}|\rho_1(t)-\rho_2(t)|$, for all pairs $\rho_1(t)$ and $\rho_2(t)$ and times $t\geq0$.

Existence of a completely positive dynamical map connecting time-sequential states of the system (CP divisibility): Rivas, Huelga, and Plenio (RHP)~\cite{rivas_entanglement_2010} identified quantum Markovianity with the existence of a completely positive map $\Phi_{t,t'}$ connecting the states of a quantum system at any two consecutive times $t'<t$. The divisibility of the dynamical map expresses the idea that the dynamics from initial time $t_i$ to final time $t_f$ can be represented as a concatenation of intermediate propagators. This may be expressed as follows. Given arbitrary time instants $t_f\geq t_n \geq t_{n-1} \geq \cdots \geq t_2 \geq t_1 \geq t_i,$
\begin{align}
\Phi(t_f, t_i) &= K(t_f, t_n)K(t_n,t_{n-1}) \cdots K(t_2,t_1)\Phi(t_1, t_i)
\label{cpdivdef}
\end{align}
The map is said to be Markovian (CP-divisible) if for all $t$, the intermediate propagators $K(t_m, t_{m-1})$ is CP.
CP divisibility also implies that the canonical decay rates $\gamma_i(t)$, appearing in the time-local master equation~\cite{gorini_completely_1976} given by
\begin{equation}
\begin{split}
 \mathcal{L}(t)[\rho]=&-\imath [H(t),\rho]\\&
+\sum_i \gamma_i (t)
\left(L_i(t)\rho L_i(t)^\dagger-\frac {1}{2}\{L_i(t)^\dagger L_i(t),\rho\}\right),
\label{gksl}
\end{split}
\end{equation}
are always positive. Here, $H(t)$ is the effective Hamiltonian, $L_i(t)$'s are the Lindblad jump operators. Note that $\{,\}$ denotes the anticommutator bracket.

Let us now summarize the basic definitions that we need.
\begin{itemize}
    \item Unital Maps: A dynamical map is unital if $\Phi(t)[\mathbb{1}] = \mathbb{1}$. For the corresponding time-local generator $\mathcal{L}(t)[\mathbb{1}] =0$. For non-unital maps, $\mathcal{L}(t)[\mathbb{1}] \neq 0$.

    \item Eternal non-Markovianity (ENMity): In Eq.~(\ref{gksl}), if any of the decay rates $\gamma_i(t)$ remains negative for all times $t > 0$, and furthermore $\exists_{\delta >0} \lim_{t\rightarrow\infty} \gamma_i(t)\le-\delta$, then the evolution is said to be eternally non-Markovian. The first example of ENMity was shown in~\cite{hall2010}, where one of the rates asymptotically takes on the value $-1$. Later, it was shown that this corresponds to a convex mixture of two Pauli dephasing semigroups~\cite{megier_eternal_2017}. A weak version of ENMity can be defined, wherein there exists an eternally negative rate $\gamma_i$, but no such finite bound $\delta$, i.e., $\lim_{t\rightarrow\infty} \gamma_i(t)\rightarrow 0^{-}$. 

     Quantum non-Markovianity has been demonstrated experimentally, for instance see~\cite{Bernardes2015-fm, Goswami2021-tn,Rojas-Rojas2024-ml}.  More recently, the production of non-Markovianity including eternal non-Markovianity by mixing Markovian maps has been experimentally demonstrated on an NMR quantum processor ~\cite{gulati2024}. 

    \item Quasi-Eternal non-Markovianity (Q-ENMity): If a decay rate $\gamma_i(t)$ in Eq.~(\ref{gksl}) becomes negative at some instant of time, $t = t^{\star}$ and remains negative forever (at $t=\infty)$, the map is said to be quasi-eternally non-Markovian. Here again we can define the weak version, analogous to the case of ENMity. An example of a weakly ENM generalized amplitude damping map was presented in~ \cite{shrikant2024eternal}.
\end{itemize}
In Ref. \cite{shrikant2024eternal}, the authors address the question of ENMity in the context of a restricted family of damping maps. Here we consider qubit evolution in a more general family of non-unital qubit maps, including phase covariant maps, which include damping ones as a special case. 

\begin{Proposition} 
\label{ncpnegativedcay}
A time-local master equation Eq.~(\ref{gksl}) is not completely positive if any canonical decay rate is negative at $t=0$. 
\end{Proposition}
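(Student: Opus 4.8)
The plan is to reduce complete positivity to the short-time behaviour of the Choi matrix and then read off the relevant eigenvalues by first-order perturbation theory. First I would note that, since $\Phi(0)=\mathrm{id}$, integrating Eq.~(\ref{gksl}) gives the short-time expansion $\Phi(t)=\mathrm{id}+t\,\mathcal{L}(0)+O(t^2)$, so whether $\Phi(t)$ is CP as $t\to 0^+$ is controlled entirely by the generator $\mathcal{L}(0)$. Complete positivity of $\Phi(t)$ is equivalent to positive semidefiniteness of its Choi matrix $C(t)=(\Phi(t)\otimes\mathbb{1})[\ket{\Omega}\!\bra{\Omega}]$, where $\ket{\Omega}=\sum_j\ket{jj}$ is the (unnormalised) maximally entangled state. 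Expanding, $C(t)=\ket{\Omega}\!\bra{\Omega}+t\,C_{\mathcal{L}}+O(t^2)$ with $C_{\mathcal{L}}=(\mathcal{L}(0)\otimes\mathbb{1})[\ket{\Omega}\!\bra{\Omega}]$, and $C_{\mathcal{L}}$ is Hermitian since $\mathcal{L}(0)$ is Hermiticity-preserving.

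At $t=0$ the Choi matrix $\ket{\Omega}\!\bra{\Omega}$ is rank one: it has a single positive eigenvalue along $\ket{\Omega}$ and a $(d^2-1)$-fold degenerate zero eigenvalue filling the orthogonal complement $V^\perp=\{\ket{\Omega}\}^\perp$. The idea is that positivity for small $t>0$ is decided in this degenerate subspace: by first-order degenerate perturbation theory the vanishing eigenvalues shift to $t\,\mu_i+O(t^2)$, where the $\mu_i$ are the eigenvalues of the compressed operator $P^\perp C_{\mathcal{L}} P^\perp$ and $P^\perp$ projects onto $V^\perp$. It therefore suffices to show that these $\mu_i$ are exactly the canonical decay rates $\gamma_i(0)$.

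The key computation is the evaluation of $P^\perp C_{\mathcal{L}} P^\perp$. The Hamiltonian part contributes $-\imath[(H\otimes\mathbb{1}),\ket{\Omega}\!\bra{\Omega}]$, and since $P^\perp\ket{\Omega}=0$ this term is annihilated on both sides by the compression. In the dissipative part the anticommutator terms also carry a factor $\ket{\Omega}\!\bra{\Omega}$ and are likewise killed, leaving only $\sum_i\gamma_i(0)\,(L_i\otimes\mathbb{1})\ket{\Omega}\!\bra{\Omega}(L_i^\dagger\otimes\mathbb{1})$. Because the canonical jump operators are traceless, each vector $(L_i\otimes\mathbb{1})\ket{\Omega}$ already lies in $V^\perp$ so that $P^\perp$ acts trivially, and because they are orthonormal in the Hilbert--Schmidt inner product these vectors are themselves orthonormal. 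Hence $P^\perp C_{\mathcal{L}} P^\perp=\sum_i\gamma_i(0)\,(L_i\otimes\mathbb{1})\ket{\Omega}\!\bra{\Omega}(L_i^\dagger\otimes\mathbb{1})$ is already diagonal with eigenvalues precisely $\gamma_i(0)$, establishing $\mu_i=\gamma_i(0)$.

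Putting this together, if some canonical rate satisfies $\gamma_i(0)<0$ then the corresponding Choi eigenvalue behaves as $t\,\gamma_i(0)+O(t^2)<0$ for all sufficiently small $t>0$, so $C(t)$ fails to be positive semidefinite and $\Phi(t)$ is not CP. I expect the only genuinely delicate step to be the identification $\mu_i=\gamma_i(0)$, which hinges on the defining properties of the canonical form (traceless, Hilbert--Schmidt-orthonormal $L_i$) to make the compressed Choi block both supported on $V^\perp$ and diagonal. The degenerate perturbation theory itself is harmless, since a strictly negative first-order coefficient cannot be rescued by the $O(t^2)$ remainder once $t$ is small enough.
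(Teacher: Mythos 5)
Your proof is correct, but it takes a route the paper does not: the paper in fact gives no proof of this proposition at all, stating it and deferring to an illustrative example (the Remark in Sec.~\ref{Pauli} on the affine mixture $x_1 \Phi_1 + x_2 \Phi_2 - x_3 \Phi_3$ of Pauli semigroups, where a negative Choi eigenvalue accompanies $\gamma_3(0)<0$). Your argument supplies a genuine, dimension-independent proof, and its one delicate step --- identifying the first-order shifts of the degenerate zero eigenvalue of the Choi matrix with the canonical rates $\gamma_i(0)$ --- is handled correctly: tracelessness of the canonical $L_i$ puts each $(L_i\otimes\mathbb{1})|\Omega\rangle$ inside $\{|\Omega\rangle\}^\perp$, Hilbert--Schmidt orthonormality makes these vectors orthonormal, and the Hamiltonian and anticommutator pieces of $C_{\mathcal{L}}$ die under the compression because each of their terms carries $|\Omega\rangle$ or $\langle\Omega|$ as an outermost factor. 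Two refinements are worth noting. First, you can dispense with degenerate perturbation theory altogether: setting $v_i=(L_i\otimes\mathbb{1})|\Omega\rangle$, the variational principle gives that the smallest eigenvalue of $C(t)$ is at most $\langle v_i|C(t)|v_i\rangle = t\,\gamma_i(0)+o(t)<0$ for sufficiently small $t>0$; this single expectation value sidesteps any question of whether the eigenvalue branches of the merely differentiable family $C(t)$ admit the expansion you invoke (Rellich-type analyticity is not available if $\Phi$ is only once differentiable at $t=0$, though a Weyl-inequality bound on the $O(t^2)$ remainder would also patch this). Second, your expansion $\Phi(t)=\mathrm{id}+t\,\mathcal{L}(0)+O(t^2)$ tacitly assumes a quadratic remainder; time-locality and differentiability at $t=0$ only guarantee $\mathrm{id}+t\,\mathcal{L}(0)+o(t)$, which fortunately is all your conclusion needs. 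What your approach buys over the paper's is an actual proof, valid in any finite dimension, and it makes explicit that the proposition hinges precisely on the defining properties of the canonical form: for a non-canonical (non-traceless or non-orthonormal) choice of jump operators, a negative coefficient at $t=0$ would not translate into a negative Choi eigenvalue, since such decompositions are non-unique and can exhibit negative weights even for CP-divisible dynamics.
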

An example is provided in Sec.~\ref{Pauli}.

\begin{Proposition}
    \label{enmlemma}
    For an eternally non-Markovian map, there is at least one decay rate in the associated time-local master equation that is zero at $t=0$.
\end{Proposition}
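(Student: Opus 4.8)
The plan is to derive the claim as a direct consequence of Proposition~\ref{ncpnegativedcay} together with the continuity of the canonical rates in time. The key observation is that a physical dynamical map must be completely positive at \emph{every} instant $t\ge 0$, and in particular in the limit $t\to 0^{+}$ where $\Phi(0)=\mathbb{1}$. Thus the canonical rates are simultaneously constrained by complete positivity at the origin and by the defining property of eternal non-Markovianity away from it.

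First I would isolate the eternally negative rate. By the definition of ENMity there exists an index $j$ such that $\gamma_j(t)<0$ for all $t>0$. Second, I would apply the contrapositive of Proposition~\ref{ncpnegativedcay}: since $\Phi(t)$ is CP for all $t\ge 0$, no canonical rate can be negative at $t=0$, so in particular $\gamma_j(0)\ge 0$. Third, assuming $\gamma_j$ is continuous at the origin, I would take the one-sided limit
\begin{equation}
\gamma_j(0)=\lim_{t\to 0^{+}}\gamma_j(t)\le 0,
\end{equation}
where the inequality follows because every term in the limit is strictly negative. Combining $\gamma_j(0)\ge 0$ with $\gamma_j(0)\le 0$ forces $\gamma_j(0)=0$, which is precisely the assertion: the eternally negative rate vanishes at the initial time.

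The main obstacle is justifying the continuity invoked in the third step, namely that the canonical decay rates extend continuously to $t=0$ and that the relevant one-sided limit exists and equals $\gamma_j(0)$. For the phase covariant and Pauli families considered here the rates are explicit smooth functions of time, so continuity holds by inspection; I would therefore phrase the proposition under the standing assumption that the generator $\mathcal{L}(t)$ and its canonical rates are continuous at the origin, and verify this directly for the concrete families treated in the later sections.
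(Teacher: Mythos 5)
Your proof is correct and follows essentially the same route as the paper: use Proposition~\ref{ncpnegativedcay} (complete positivity forces $\gamma_j(0)\ge 0$) and combine it with the eternal negativity of $\gamma_j(t)$ for $t>0$ to conclude the rate vanishes at the origin. The only difference is that you make explicit the continuity of the rates at $t=0$, a step the paper's proof uses implicitly when it asserts the rate ``ought to be zero''; this is added rigor, not a different method.
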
 
\begin{proof}
By definition, an eternally non-Markovian map has at least one decay rate that is negative at all times $t>0$. By Proposition~\ref{ncpnegativedcay}, we know that a negative decay rate at $t=0$ corresponds to a NCP map. Therefore, for ENM, the decay rate which remains negative for all $t>0$ ought to be zero  at $t=0$.
\end{proof}
In this article, we discuss the conditions for obtaining eternally non-Markovian maps in the unital and non-unital qubit cases. Sec.~\ref{Pauli} deals with Pauli maps, and the discussion on the non-unital case is done in Sec.~\ref{nonunital}. Specifically, we look into the example of phase-covariant maps, which includes the damping maps as well. Finally, we summarize the discussion and pose a few open questions.
\section{Pauli Maps and Eternal non-Markovianity}
\label{Pauli}
In the context of Pauli maps that are unital, a dephasing map is never eternally non-Markovian. To show that this is not possible, without loss of generality consider the dephasing map \begin{equation}
\Phi (t)[\rho]=[1-p(t)]\rho + p(t)\sigma_3\rho \sigma_3
\end{equation} 
The associated time-local master equation can be evaluated to be
\begin{equation}
\mathcal{L}(t)[\rho]= \frac{\dot{p}(t)}{1-2p(t)}(\sigma_z\rho\sigma_z-\rho) 
\end{equation}
If this has to be eternally non-Markovian, the decay rate $\gamma(t) \equiv \frac{\dot{p}(t)}{1-2p(t)}$ has to be zero at $t=0$ and negative at all $t>0$. 
Integrating, we find $p(t) = \frac{1}{2}\left(1-e^{\big[-\frac{1}{2}\int \gamma(t) dt\big]}\right)$. For eternal non-Markovianiy, the integrand must evaluate to a negative definite number, implying that $p(t)$ will be negative for all $t>0$, which is clearly disallowed.

Consider the three Pauli dynamical semigroups
\begin{equation}
\label{paulichanndef}
\Phi_i (t)[\rho]=[1-p(t)]\rho + p(t)\sigma_i\rho \sigma_i, i= 1,2,3
\end{equation} 
with $p(t)$ being the decoherence function and $\sigma_i$ the Pauli matrices, being mixed in proportions of $x_i$ as
\begin{equation}
\label{outputmappauli}
\tilde{\Phi}(t) = \sum_{i=1}^{3} x_{i} \Phi_i (t),  \quad (x_i >0, \sum_i x_i =1).
\end{equation}
The map $\tilde{\Phi}$ satisfies the eigenvalue relations
\begin{equation}
\label{eigpauli}
\tilde{\Phi}(t)[\sigma_i] = \lambda_i (t) \sigma_i.
\end{equation}
 Let each of the dephasing maps be characterized by the same decoherence function, 
\begin{equation}
\label{decohfunc}
p(t) = \frac{1-e^{-ct}}{2}, c >0.
\end{equation}
The corresponding time-local master equation for $\tilde{\Phi}(t)$ reads as
\begin{equation}
\label{megen}
\mathcal{L}(t)[\rho] = \sum_{i=1}^{3}\gamma_{i} (t) (\sigma_i\rho\sigma_i-\rho),
\end{equation}
with the decay rates
\begin{eqnarray}
\gamma_1(t) &=& -f[x_1,p(t)]+f[x_2,p(t)]+f[x_3,p(t)] \nonumber \\
\gamma_2(t)  &=&  f[x_1,p(t)]-f[x_2,p(t)]+f[x_3,p(t)] \nonumber \\
\gamma_3(t)  &=&  f[x_1,p(t)]+f[x_2,p(t)]-f[x_3,p(t)],
\label{eq:3form}
\end{eqnarray} 
where $f[(x_i,p(t)] = \frac{1-x_i}{1-2 (1-x_i)p(t)}\frac{\dot{p}(t)}{2}\ge0$ for all $p(t) \in [0,\frac{1}{2})$.
 \begin{Theorem}
\label{Theorem1}
	 For Pauli maps, an ENM map is obtained only by the convex combination of two Pauli dephasing semigroups. Based on mixing coefficients, mixing all three maps can result only in quasi-ENMity.
\end{Theorem}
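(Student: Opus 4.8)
The plan is to extract essentially everything from one evaluation of the decay rates at $t=0$ together with control of their large-time tails. First I would insert the decoherence function~(\ref{decohfunc}), for which $1-2p(t)=e^{-ct}$ and $\dot p(t)=\tfrac{c}{2}e^{-ct}$, into the definition of $f$ appearing below~(\ref{eq:3form}), obtaining the compact form
\[
f[x_i,p(t)] = \frac{c\,(1-x_i)\,e^{-ct}}{4\,[\,x_i+(1-x_i)e^{-ct}\,]}.
\]
Evaluating the rates~(\ref{eq:3form}) at $t=0$ then collapses, using $\sum_i x_i=1$, to the clean identity $\gamma_i(0)=\tfrac{c}{2}x_i$. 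This is the linchpin: in a genuine three-way mixture all $x_i>0$, so every rate is \emph{strictly positive} at $t=0$, and by Proposition~\ref{enmlemma} (ENMity requires some rate to vanish at $t=0$) no three-map mixture can be eternally non-Markovian. Conversely, a rate can vanish at $t=0$ only if the corresponding $x_i=0$, i.e.\ the mixture is supported on two semigroups; this already forces the ``only by two'' half of the statement.

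Next I would verify that a genuine two-map mixture \emph{does} realise ENMity. Setting $x_1=0$ makes $f[x_1,p(t)]=\tfrac{c}{4}$ \emph{constant}, while $f[x_2,p(t)]$ and $f[x_3,p(t)]$ are monotonically decreasing in $t$ (written as functions of $u=e^{-ct}$ they are increasing, and $u$ decreases in $t$). Hence the deficient rate $\gamma_1(t)=-\tfrac{c}{4}+f[x_2,p(t)]+f[x_3,p(t)]$ starts at $\gamma_1(0)=0$, decreases strictly, stays negative for all $t>0$, and tends to $-\tfrac{c}{4}$: eternal negativity with a finite asymptotic bound, as required.

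For the remaining claim, that three-way mixing can still yield quasi-ENMity, I would analyse the large-$t$ sign of each rate. With $u=e^{-ct}\to 0^{+}$ one has $f[x_i,p(t)]\sim \tfrac{c}{4}\tfrac{1-x_i}{x_i}\,u$, so $\gamma_i(t)\sim \tfrac{c}{4}S_i\,u$ with $S_i=\tfrac{1}{x_j}+\tfrac{1}{x_k}-\tfrac{1}{x_i}-1$ ($\{i,j,k\}=\{1,2,3\}$). Thus $\gamma_i\to 0^{-}$ precisely when $S_i<0$, i.e.\ when $\tfrac{1}{x_j}+\tfrac{1}{x_k}<1+\tfrac{1}{x_i}$, which forces $x_i$ to be small relative to the others; moreover adding two such inequalities yields some $x>1$, so at most one $S_i$ can be negative. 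Since $\gamma_i(0)>0$, a rate with $S_i<0$ must cross zero, and if it crosses only once it remains negative thereafter, giving quasi-ENMity but never ENMity.

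The main obstacle is exactly this sign-\emph{persistence}: ruling out a second crossing back to positive values after $t^{\star}$. I would handle it by clearing the strictly positive common denominator $\prod_i[x_i+(1-x_i)u]$ in $\gamma_i$, which reduces the sign question to that of a polynomial numerator $N(u)$; since $N(0)=0$ one factors out $u$ and is left with a residual quadratic $M(u)$ whose sign on $(0,1]$ governs $\gamma_i$. Its endpoint signs are fixed by $\mathrm{sgn}\,M(0)=\mathrm{sgn}\,S_i$ and $\mathrm{sgn}\,M(1)=\mathrm{sgn}\,\gamma_i(0)>0$, so showing $M$ has at most one root in $(0,1)$ closes the argument and pins down the coefficient region in which three-way mixing produces quasi-ENMity.
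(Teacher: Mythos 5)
Your proposal is correct, and its central step is the same as the paper's: the paper also rules out three\-/way ENMity by noting $\lim_{t\rightarrow 0}\gamma_j = \tfrac{c x_j}{2} > 0$ whenever all $x_j>0$ and invoking Proposition~\ref{enmlemma}. Where you genuinely diverge is in the other two parts. For the two-map mixture, the paper switches to the eigenvalue form, Eq.~(\ref{eq:time-dep_rates}), and argues that since $\dot{\lambda}_1/\lambda_1=-c$ while the other log-derivatives are negative with magnitude at most $c$, one rate stays negative; as stated this bound alone does not force $\gamma_1<0$ (it only gives $\gamma_1\le c/4$), whereas your decomposition --- $f[0,p(t)]=c/4$ constant, $f[x_2,p(t)]$ and $f[x_3,p(t)]$ strictly decreasing in $t$ from values summing to $c/4$ --- makes the eternal strict negativity and the asymptotic value $-c/4$ immediate. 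For the quasi-ENM claim, the paper leans on a monotonicity statement for $f$ in $p$ (computed with the factor $\dot{p}$ held fixed, so it does not directly control the time dependence) to assert sign persistence; your treatment is sharper on both counts: the expansion $\gamma_i \sim \tfrac{c}{4}S_i u$ with $S_i = \tfrac{1}{x_j}+\tfrac{1}{x_k}-\tfrac{1}{x_i}-1$ identifies exactly which mixing coefficients produce a quasi-ENM rate (at most one $S_i$ can be negative, by your addition argument), and the polynomial reduction settles persistence rigorously. In fact the step you leave flagged as remaining is automatic: $M$ is a quadratic with $M(0)<0<M(1)$, and if both its roots lay in $(0,1)$ then $M(0)$ and $M(1)$ would share a sign, so $M$ has exactly one root in $(0,1)$ and the rate crosses zero exactly once. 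Net assessment: your route is somewhat longer but fully rigorous, and it turns the paper's phrase ``based on mixing coefficients'' into an explicit criterion ($S_i<0$), which the paper only illustrates by example.
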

\begin{proof}
 It follows from Eq.~(\ref{eq:3form}) at most only one of the three decay rates can be negative. Also note that $\frac{df[x_i,p(t)]}{dp} = \frac{2 (1-x_i )^2}{[1-2 (1-x_i ) p(t)]^2} > 0$ for all $p(t), x_i$. Thus, in a given decay rate $\gamma_i(t)$, the negative term will produce a monotonic decrease in the rate whereas the other two terms will produce a monotonic increase. The monotony of $f[x_i,p(t)]$ tells us that if a given rate (say) turns negative at $p(t)=p_0$, then it remains negative throughout the remaining range of $p(t)$. 

Suppose that $\forall_j x_j>0$. Eq.~(\ref{eq:3form}) implies that $\lim_{t\rightarrow0} \gamma_j =  \frac{c x_j}{2} > 0$. This therefore rules out the possibility of obtaining an ENM map by mixing all three Pauli dephasing semigroups. 

Consider the two-way mixing. Without loss of generality, let $x_1=0$. Also, let us notate $x_2=a$. The decay rates are expressed in terms of the eigenvalues of the resultant map $\tilde{\Phi}(t)$ as
\begin{align}
	\gamma_1(t) &= \frac{1}{4} \bigg( \frac{\dot{\lambda}_1(t)}{\lambda_1(t)} - \frac{\dot{\lambda}_2(t)}{\lambda_2(t)} - \frac{\dot{\lambda}_3(t)}{\lambda_3(t)}  \bigg), \nonumber\\ 
	\gamma_2(t) &= \frac{1}{4} \bigg(\frac{\dot{\lambda}_2(t)}{\lambda_2(t)} -\frac{\dot{\lambda}_1(t)}{\lambda_1(t)} - \frac{\dot{\lambda}_3(t)}{\lambda_3(t)}  \bigg), \nonumber\\
	\gamma_3(t) &= \frac{1}{4} \bigg(\frac{\dot{\lambda}_3(t)}{\lambda_3(t)}  - \frac{\dot{\lambda}_1(t)}{\lambda_1(t)} - \frac{\dot{\lambda}_2(t)}{\lambda_2(t)} \bigg). 
	\label{eq:time-dep_rates}
\end{align} 
The eigenvalues of the the map $\tilde{\Phi}(t)$ are $\lambda_1=e^{-ct}, \lambda_2=1- a(1-e^{-ct}), \lambda_3=1- (1-a)(1-e^{-ct})$. The terms  $\frac{\dot{\lambda}_i(t)}{\lambda_i(t)}$ are always negative and less than or equal to $c$ in absolute value. The rates are combinations of these three terms. Also,  $\frac{\dot{\lambda}_1(t)}{\lambda_1(t)}=-c$ for $t>0$, indicating that one rate is always negative.
\end{proof}
\begin{Example}
   The celebrated example of the eternally non-Markovian map pointed out in~\cite{hall2010}, has $\gamma_1=\gamma_2=\frac{a}{2}$ and $\gamma_3(t)=-\frac{\tanh(at)}{2}( a>0)$, where precisely the one rate that shows eternal negative behavior vanishes at $t=0$. This map is obtained by the convex combination of the two Pauli dephasing semigroups, $\Phi_1(t)$ and $\Phi_2(t)$ in equal proportions.
\end{Example}
\begin{Example}
    Consider the three-way mixing of Pauli semigroups, with $x_1=0.2$, $x_2=x_3=0.4$. The map is quasi eternally non-Markovian, as the decay rate $\gamma_1(t)$ becomes negative at a time instant and then remains negative throughout.
\end{Example}

\begin{Remark} The symmetric depolarizing map, described by the
time-local master equation
 \begin{equation}
 \mathcal{L}(t) \rho= \frac{\dot{p}(t)}{1-p(t)}\sum_{i=1}^{3}(\sigma_i\rho\sigma_i-\rho).
\end{equation}
cannot be eternally non-Markovian. This follows noting that it corresponds to a mixture of three Pauli dephasing maps~\cite{jagadish_convex_2020},
\begin{equation}
\Phi (t)[\rho]=[1-\frac{3p(t)}{4}]\rho + \frac{p(t)}{4}\sum_{i=1}^{3}\sigma_i\rho \sigma_i.
\end{equation} 
Theorem \ref{Theorem1} then rules out eternal non-Markovianity.
\end{Remark}
\begin{Remark}
   Consider the affine mixing of Pauli semigroups $x_1 \Phi_1 + x_2 \Phi_2 - x_3 \Phi_3 ; x_1 +x_2 -x_3 =1$. One can easily see that the resultant dynamical map violates complete positivity, as is evident from a negative eigenvalue of the associated Choi matrix. The decay rate $\gamma_3(t)$ in the master equation, Eq.~(\ref{megen}) is negative at $t=0$.
\end{Remark}
\section{Non-unital Maps}
\label{nonunital}
We shall now look into the case of non-unital maps. To this end, let us consider the dynamical
map,
\begin{equation} 
\label{nonunitalmap}
\begin{split}
\Phi(t) [\rho] =& \frac{1}{2} \Big[{\rm tr}[\rho] \big(\mathbb{1}
+ t_3(t) \sigma_3 \big) + \lambda_1(t) {\rm tr}[\sigma_1 \rho] 
\sigma_1 +\\& \lambda_2(t) {\rm tr}[\sigma_2 \rho] \sigma_2 +
\lambda_3(t) {\rm tr}[\sigma_3 \rho] \sigma_3 \Big ].
\end{split}
\end{equation}
The associated time-local generator reads as
\begin{equation}
\label{gennonunital}
\begin{split}
\mathcal{L}(t)[\rho] &=  \gamma_+(t) \left( \sigma_+ \rho \sigma_- -
\frac{1}{2} \{\rho, \sigma_- \sigma_+ \} \right)+\\
& \gamma_-(t)
\left( \sigma_- \rho \sigma_+ - \frac{1}{2} \{\rho, \sigma_+
\sigma_- \} \right) + \gamma_3(t) \left( \sigma_3 \rho \sigma_3
- \rho \right),
\end{split}
\end{equation}
where $\sigma_{\pm} = \frac{1}{2}( \sigma_1 \pm i
\sigma_2 )$ and the decoherence rates
$\gamma_{\pm}(t)$ and $\gamma_3(t)$ are functions of
$\lambda_i(t) [i= 1,2,3]$ and $t_3(t)$ as follows.
\begin{eqnarray} 
\label{nonunitalrates}
\gamma_{\pm}(t) &=& \alpha(t)\pm\beta(t),\nonumber\\ 
\gamma_3(t)
&=& \frac{1}{4}\Bigg(
-\frac{\dot{\lambda} _1(t)}{\lambda _1(t)}-\frac{\dot{\lambda} _2(t)}{\lambda _2(t)} +\frac{\dot{\lambda} _3(t)}{\lambda _3(t)}\Bigg).
\end{eqnarray}
The expressions for $\alpha(t)$ and $\beta(t)$ are not simple, and we omit the details. However, it turns out that we do not need the explicit functional forms of $\alpha(t), \beta(t)$ for our analysis as can be seen below.

We now ask the question whether the non-unital map, Eq.~(\ref{nonunitalmap}) can be ENM in its nonunital part, i.e., the decay rates $\gamma_{\pm}(t)$ Eq.~(\ref{nonunitalrates}) being negative for all time $t>0$. We prove that this is forbidden in the sense that there can be no $\delta >0$ such that $\lim_{t\rightarrow\infty} \gamma_{\pm}(t) \le - \delta$.

\begin{Theorem}
   For the non-unital qubit map, Eq.~(\ref{nonunitalmap}) either decay rate corresponding to the non-unital part of the generator cannot be asymptotically negative.
    \label{thm:ENM}
\end{Theorem}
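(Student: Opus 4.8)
The plan is to sidestep the complicated functions $\alpha(t),\beta(t)$ entirely and work directly with the two combinations $\gamma_+ + \gamma_-$ and $\gamma_+ - \gamma_-$, which can be read off from the \emph{longitudinal} Bloch-vector evolution alone. Writing $\rho(t)=\tfrac{1}{2}(\mathbb{1}+\vec r(t)\cdot\vec\sigma)$, the map Eq.~(\ref{nonunitalmap}) gives $r_3(t)=\lambda_3(t)\,r_3(0)+t_3(t)$, while applying the generator Eq.~(\ref{gennonunital}) yields $\dot r_3 = -(\gamma_+ + \gamma_-)\,r_3 + (\gamma_+ - \gamma_-)$, the dephasing term $\gamma_3$ dropping out of the $r_3$ equation. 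Matching the two expressions produces $\gamma_+ + \gamma_- = -\dot\lambda_3/\lambda_3$ and $\gamma_+ - \gamma_- = \dot t_3 - (\dot\lambda_3/\lambda_3)\,t_3$, from which each rate is isolated without any knowledge of $\alpha,\beta$.

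The key observation is that these two relations collapse into a total-derivative form,
\begin{equation}
\gamma_\pm(t) = \frac{\lambda_3(t)}{2}\,\frac{d}{dt}\!\left(\frac{1 \pm t_3(t)}{\lambda_3(t)}\right),
\end{equation}
so the sign of $\gamma_\pm$ is controlled entirely by the monotonicity of the auxiliary functions $Q_\pm(t) := (1\pm t_3(t))/\lambda_3(t)$. I would then invoke the positivity constraints that any legitimate (CP, hence positive) dynamical map must obey: demanding that the images of the axial Bloch points $r_3(0)=\pm1$ remain in $[-1,1]$ forces $|t_3(t)| \le 1-\lambda_3(t)$ and $0<\lambda_3(t)\le 1$ for all $t$. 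These are exactly the bounds needed, since $|t_3|\le 1-\lambda_3$ is equivalent to $Q_\pm(t)\ge 1$, and $\lambda_3(0)=1,\ t_3(0)=0$ give $Q_\pm(0)=1$.

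A contradiction argument then closes the proof. Suppose one non-unital rate were asymptotically negative, say $\lim_{t\to\infty}\gamma_+(t)\le-\delta$ with $\delta>0$; then $\gamma_+(t)\le-\delta/2$ for all $t\ge T$ for some $T$. From $\dot Q_+ = 2\gamma_+/\lambda_3$ together with $0<\lambda_3\le 1$ (so that $1/\lambda_3\ge 1$) one gets $\dot Q_+\le-\delta$ on $[T,\infty)$, hence $Q_+(t)\le Q_+(T)-\delta(t-T)\to-\infty$, contradicting $Q_+(t)\ge 1$. The identical estimate with $Q_-$ disposes of $\gamma_-$, so neither non-unital rate can be bounded below by a negative constant asymptotically.

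I expect the main obstacle to be the first step: deriving the two clean relations for $\gamma_+\pm\gamma_-$ and, in particular, verifying that the $\sigma_\pm$ dissipators in Eq.~(\ref{gennonunital}) reproduce the affine longitudinal evolution with precisely these coefficients (this requires care with the $\sigma_\pm$ conventions and the anticommutator terms, and implicitly with the phase-covariant constraint $\lambda_1=\lambda_2$ that makes Eq.~(\ref{gennonunital}) the correct generator). Once the total-derivative identity and the bound $Q_\pm\ge 1$ are secured, the asymptotic estimate is elementary; a secondary check is that the conclusion is unaffected should $\lambda_3(t)\to0$, since $1/\lambda_3\ge1$ only strengthens the estimate.
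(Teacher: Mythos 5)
Your proof is correct, and it reaches the paper's conclusion by a genuinely different technical route. The paper never computes the non-unital rates explicitly: it keeps $\alpha(t),\beta(t)$ abstract, writes the longitudinal Bloch equation $\dot r_3 = -2\alpha r_3 - 2\beta$, and runs a differential-inequality argument --- assuming a non-unital rate stays below $-\delta$ after some $t^{*}$, it bounds $\dot r_3 < -2\alpha(r_3+1)-2\delta \le -2\delta$ and concludes that $r_3(t)$ must exit $[-1,1]$, violating positivity. You instead invert the longitudinal dynamics to get closed forms, $\gamma_+ + \gamma_- = -\dot\lambda_3/\lambda_3$ and $\gamma_+ - \gamma_- = \dot t_3 - (\dot\lambda_3/\lambda_3)\,t_3$, package them as the total derivative $\gamma_\pm = \tfrac{\lambda_3}{2}\,\tfrac{d}{dt}\big[(1\pm t_3)/\lambda_3\big]$, and derive the contradiction from the monotone decrease of $Q_\pm=(1\pm t_3)/\lambda_3$ against the positivity bound $Q_\pm\ge 1$ (equivalently $|t_3|\le 1-\lambda_3$). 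The physical mechanism is identical --- an asymptotically negative non-unital rate forces the affine longitudinal dynamics to push states out of the Bloch ball --- but your bookkeeping buys two things the paper's proof lacks: (i) it exhibits the explicit rate formulas that the paper declares ``not simple'' and omits (in the $r_3$ sector they are in fact simple, as your derivation shows); and (ii) it avoids the paper's weakest step, namely the implicit need for $\alpha(t)\ge 0$ in order to discard the term $-2\alpha(r_3+1)$, which the paper justifies only by the loose preliminary remark that boundedness of $r_3$ requires $\alpha\ge|\beta|$. In your argument the only inputs are $0<\lambda_3\le 1$ and $Q_\pm\ge 1$, both cleanly forced by positivity of the axial images together with invertibility, which is needed anyway for the time-local generator to exist. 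The two caveats you flagged are real but harmless: the matching step presupposes the phase-covariant structure $\lambda_1=\lambda_2$, without which Eq.~(\ref{gennonunital}) cannot be the full generator of Eq.~(\ref{nonunitalmap}) --- a fuzziness inherited from the paper itself, whose transverse Bloch equations likewise force $\lambda_1=\lambda_2$ --- and a $\sigma_\pm$ sign-convention flip merely exchanges the roles of $Q_+$ and $Q_-$, leaving the symmetric conclusion untouched.
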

\begin{proof}
Let us consider the equations of motion for the Bloch vector, $\vec{R}(t)$ whose components are defined as
$r_i(t) = \mathrm{Tr}[\rho(t)\sigma_i]$.
\begin{eqnarray}
\label{blocheqns}
     \dot{r}_1(t) &=& -[\alpha(t) + \gamma_3(t)] r_1(t),\nonumber \\
    \dot{r}_2(t) &=& -[\alpha(t) + \gamma_3(t)] r_2(t),\nonumber \\
    \dot{r}_3(t) &=& -2 \alpha(t) r_3(t) - 2\beta(t).
\end{eqnarray}
From the above equation, one can see that, for $r_3(t)$ to remain bounded, $\alpha(t) \geq |\beta(t)|$. This therefore is a sufficient condition. Now, w.l.o.g., let us assume that the decay rate $\gamma_{-}(t)$ becomes  negative, at some $t = t^{*}> 0$ and stays negative throughout thereafter.  This implies that there exists a constant, $\delta>0$ such that 
\begin{equation}
\forall_{(t\geq t^{*})} \alpha(t)-\beta(t)\le -\delta.
\end{equation} 
From Eq.~(\ref{blocheqns}), we see 
\begin{equation}
\dot{r}_3(t)< -2\alpha(t)[r_3(t)+1]-2\delta.
\end{equation}
As $|r_3(t)|\leq 1$, $\dot{r}_3(t)< -2\delta$, which implies that for $t>1/2\delta, \thinspace r_3(t)<-1$, which is no longer a valid density matrix, in violation of the positivity of the map. 
\end{proof}

\begin{Example}
    For the phase covariant map [where $\lambda_1(t) =\lambda_2(t) = \lambda(t)$, in Eq.~(\ref{nonunitalmap})], we have
\begin{equation}
    \alpha(t)= -\frac{\dot{\lambda} _3(t)}{2\lambda _3(t)},\quad
\beta(t) = \frac{\dot{t}_3(t)}{2} -\frac{t_3(t)\dot{\lambda}_3(t)}{2\lambda _3(t)}
\end{equation} 
Phase-covariant qubit maps can manifest eternal non-Markovianity only in the unital part of the generator (i.e., the decay rate term $\gamma_3(t)$ corresponding to $\left( \sigma_3 \rho \sigma_3 - \rho \right)$ which was already shown in~\cite{Filippov2020-ty}. However, by Theorem~\ref{thm:ENM}, the decay rates $\gamma_{\pm}(t)$ can never be eternally negative.
\end{Example}

\begin{Remark}
    Theorem~\ref{thm:ENM} therefore says that even quasi-ENMity in the nonunital part of the qubit map is ruled out in the above strong sense. What this allows at most is quasi-eternal non-Markovianity in the non-unital part in the weak sense that $\lim_{t\rightarrow \infty} \gamma_{\pm}=0-$, i.e., asymptotically approaching zero from the negative side.
\end{Remark}

\begin{Remark}
    Following the same arguments, one can see that neither the generalized amplitude damping map [$\gamma_3(t) =0$ in the phase-covariant map] nor the amplitude damping map [where $\gamma_{-}(t) =0 =\gamma_3(t)$. in the phase-covariant map] can be quasi-eternally non-Markovian.
\end{Remark}
\section{Conclusions and Discussions}
The present work covers different aspects of eternal non-Markovianity (ENMity) in the context of unital and nonunital qubit maps. In the former case, we point out conditions under which ENMity can be produced by mixing semigroups. In the latter case, we discuss a negative instance, where ENMity is forbidden. For a class of non-unital maps, we have proved that the generator corresponding to the non-unital part can never lead to an eternally negative decay rate.  It is not obvious how to extend Theorem \ref{thm:ENM} to higher dimensions, because the proof makes use of the specific structure of the qubit affine representation. This is compounded by the fact that the parameter space becomes larger in higher dimensions. In the case of Pauli mixtures, the production of ENMity by mixing two dephasing semigroups brings forth the question of whether there exist \textit{irreducible} qudit ENM maps, namely those not obtainable as mixtures of other non-ENM maps.

Quantum memory effects can affect the properties and resource requirements in quantum information processing, in particular, lowering the threshold of quantum error correcting codes \cite{Nickerson2019-th}, and posing challenges to fault-tolerant quantum computing \cite{terhal2005fault}, necessitating specific methods to adapt quantum error correction to non-Markovian errors \cite{oreshkov2007continuous}. The eternity aspect accentuates the impact of non-Markovianity, making it important for the practical control of quantum devices.

Another interesting question is to see if a microscopic master equation representing eternal non-Markovian dynamics could be derived from first principles. In other words, can one engineer suitable system-bath interaction and initial states that could lead to ENM dynamics on the system of interest?
\acknowledgements
R.S.   acknowledges the support of Indian Science \& Engineering Research Board (SERB) grant CRG/2022/008345.
\vspace{-3 mm}
\bibliography{ENM}
\end{document}